\newtheorem{theorem}{Theorem}
\newtheorem{claim}{Claim}
\newtheorem{corollary}[theorem]{Corollary}
\newtheorem{remark}{Remark}
\newtheorem{definition}{Definition}
\begin{document}

\title{On the Message Dimensions of \\Vector Linearly Solvable Networks}
\author{Niladri Das and Brijesh Kumar Rai}

\maketitle

\begin{abstract}
It is known that there exists a network which does not have a scalar linear solution over any finite field but has a vector linear solution when message dimension is $2$ \cite{medard}. It is not known whether this result can be generalized for an arbitrary message dimension. In this paper, we show that there exists a network which admits an $m$ dimensional vector linear solution, where $m$ is a positive integer greater than or equal to $2$, but does not have a vector linear solution over any finite field when the message dimension is less than $m$. 
\end{abstract}

\section{Introduction}

Network coding - which emerged as an improvement over routing - has been an intense area of research since its inception in the year 2000\cite{alshwede}. It was shown in \cite{alshwede} that for the multicast networks, the min-cut bound can be achieved using network coding. Moreover, scalar linear network coding over a sufficiently large finite field was shown to be sufficient to achieve the capacity of a multicast network\cite{li}. However, for the non-multicast networks, M\'{e}dard \textit{et al.}\cite{medard} presented an instance of a network coding problem which does not admit a scalar linear solution over any finite field but has a $2$ dimensional vector linear solution over every finite field. This network is known as the M-network in the literature. It is a natural question whether, for any positive integer $m\geq 2$, there exists a network which admits an $m$ dimensional vector linear solution but has no vector linear solution over any finite field when message dimension is less than $m$. In this paper, we show that indeed there exists such a network. 

The prior works related to the problem addressed in this paper are as follows. \cite{insufficient,Sun} studied the effect of message dimension on vector linear solvability. In \cite{insufficient}, a network was presented which has an $m$ dimensional vector linear solution over $\mathbb{F}_2$ for an arbitrary positive $m$, but does not have an $m$ dimensional vector linear solution over $\mathbb{F}_q$ for odd $q$ and any $m$. In \cite{Sun}, a multicast network was presented which has a vector linear solution over $\mathbb{F}_2$ for message dimension $4$ but not for message dimension $5$. In \cite{dougherty}, it was shown that the notion of scalar linear solvability of networks can be captured by matroids. Specifically, it was shown that a network is scalar linearly solvable only if it is a matroidal network associated with a representable matroid over a finite field. The converse of this result was proved in \cite{kim}. \cite{sundar} generalized the results of \cite{dougherty,kim} for the vector linearly solvable networks and showed that the existence of an $m$ dimensional vector linear network code solution implies the existence of a discrete polymatroid with certain properties and vice versa. To show that for any positive integer $m\geq 2$, there exists a network which has an $m$ dimensional vector linear solution but has no vector linear solution for a message dimension less than $m$, either one has to give construction of such a network, or equivalently, one can construct a discrete polymatroid such that it is representable if the rank of its every element is allowed to be less than or equal to $m$, but not representable if the rank of its every element is strictly lesser than $m$ \cite{sundar}. To the best of our knowledge, neither such a network has been presented in the literature nor it has been shown that for every integer $m\geq 2$, there do not exist such networks; likewise in the area of matroid theory,  no discrete polymatroid having the above mentioned property has been reported nor it has been shown that such a discrete polymatroid does not exist.

The organization of the rest of the paper is as follows. In Section~\ref{sec1}, we present the formal definitions related to network coding. In Section~\ref{sec2}, we present the main result of the paper. Section~\ref{sec3} is devoted to the proof of the main theorem of the paper. We conclude the paper in Section~\ref{sec4}.

\section{Preliminaries} \label{sec1}

The networks considered in this paper are directed acyclic networks. A directed acyclic network can be modelled by a directed acyclic graph $G=(V,E)$, where $V$ is the set of nodes and $E\subseteq V\times V$ is the set of edges. For an edge $e = (u,v)$, $u$ is denoted by $tail(e)$ and $v$ is denoted by $head(e)$. For a node $v$, the set of edges $\{(x,v)| x \in V
\}$ is denoted by $In(v)$. A set of nodes, denoted by $S \subset V$, will be called as sources and a set of nodes, denoted by $T \subset V$, will be called as terminals. Without loss of generality (w.l.o.g), we assume that a source does not have any incoming edge and a terminal does not have any outgoing edge. Associated with every source, there is an i.i.d. random process which is uniformly distributed over a finite field $\mathbb{F}_q$. Each source process is independent of all other source processes. The source process associated with a source $s_i \in S$ is indicated by $X_i$, and the message carried over an edge $e$ is indicated by $Y_e$. Each terminal demands a subset of source processes. All the edges in the network are assumed to be unit capacity edges. An $(n,n)$ vector linear network code over a finite field $\mathbb{F}_q$ can be described in terms of the message passing through every edge and the decoding function at every terminal. For an edge $e$, when $tail(e)=s_i\in S$, $Y_e = A_{\{s_i,e\}}X_i$, where $X_i, Y_e\in \mathbb{F}_q^n , \text{and}~A_{\{s_i,e\}} \in \mathbb{F}_q^{n\times n}$. When $tail(e)$ is an intermediate node (a node other than a source or a terminal), $Y_e = \sum_{e^{\prime}\in In(tail(e))} A_{\{e^{\prime},e\}} Y_{e^{\prime}}$, where $Y_e,Y_{e^{\prime}} \in \mathbb{F}_q^n$ and  $A_{\{e^{\prime},e\}}\in \mathbb{F}_q^{n\times n}$. And for a terminal node $t_i$, $Y_{t_i} = \sum_{e^{\prime}\in In(t)} A_{\{e^{\prime},t_i\}}Y_{e^{\prime}}$, where $Y_{e^{\prime}},Y_{t_i}\in \mathbb{F}_q^n, \text{and}~A_{\{e^{\prime},t_i\}}\in \mathbb{F}_q^{n\times n}$.

A $(1,1)$ vector linear code is referred as a scalar linear code. In an $(n,n)$ vector linear code, $n$ is said to be the message dimension. An $(n,n)$ vector linear code is termed as an $n$ dimensional vector linear code. A network is said to have an $(n,n)$ vector linear solution if there exists an $(n,n)$ vector linear code for the network which enables all the terminals in the network to receive $n$ source symbols (in $n$ uses of every edge) from each of their respective demanded sources. If a network has an $n$ dimensional vector linear solution then it is said to be vector linearly solvable (for $n=1$, scalar linearly solvable).

\section{Main Result} \label{sec2}

\begin{figure}[!t]
\centering
\includegraphics[width=0.5\textwidth]{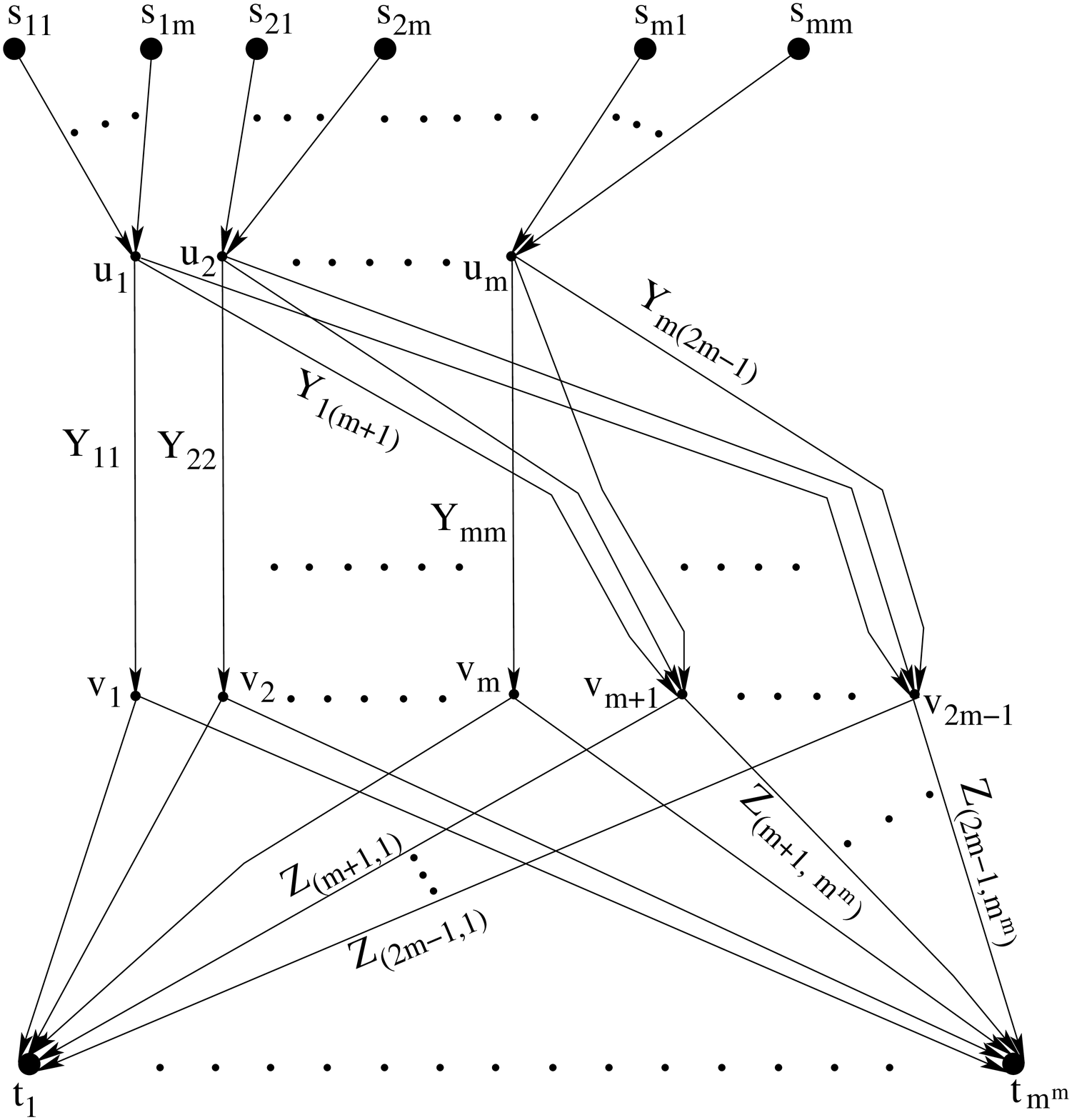}
\caption{A communication network $\mathcal{N}$, which we name as ``generalized M-network'', which is vector linearly solvable only when the message dimension is positive integer multiple of $m$}
\label{general}
\end{figure}

In Fig.~\ref{general}, we present a network $\mathcal{N}$ which has a vector linear solution in $m \geq 2$ message dimension but has no vector linear solution if the message dimension is less than $m$. $\mathcal{N}$ has $m^2$ sources and $m^m$ terminals. The sources are partitioned into $m$ sets. The $j^{\text{th}}$ source in the $i^{\text{th}}$ set is denoted by $s_{ij}$. The source $s_{ij}$ generates the message $X_{ij}$. Below we list the edges in the network:
\begin{enumerate}
\item An edge $(s_{ij}$,$u_i)$ for $1\leq i,j\leq m$
\item An edge $e_{ii} = (u_i,v_i)$ and an edge $e_{ij} = (u_i,v_j)$ for $1\leq i \leq m$ and $m+1\leq j\leq 2m-1$
\item An edge $(v_i$,$t_j)$ for $1\leq i\leq 2m-1$  and \mbox{$1\leq j\leq m^m$}
\end{enumerate}
The message transmitted over the edge $(s_{ij}$,$u_i)$ is denoted by $X_{ij}$. The message transmitted over the edge $e_{ij}$ is denoted by $Y_{ij}$. The message transmitted over the edge $(v_{i},t_{j})$ is denoted by $Z_{ij}$ for $m+1\leq i\leq 2m-1$ and $1\leq j\leq m^m$. Each terminal demands a unique tuple of $m$ source messages where the $i^{th}$ element of the tuple is $X_{ij}$ for any $j$ in the range: $1\leq j\leq m$. W.l.o.g, we assume that $t_1$ demands the source messages: $(X_{11},X_{21},\ldots,X_{m1})$.

We note that for $m=2$, $\mathcal{N}$ is the M-network. Therefore, the presented network $\mathcal{N}$ is a generalisation of the M-network. 

\begin{theorem}\label{thm1}
For an arbitrary finite field $\mathbb{F}_q$, the network $\mathcal{N}$ has a $d$ dimensional vector linear solution over $\mathbb{F}_q$ if and only if $d$ is a multiple of $m$.
\end{theorem}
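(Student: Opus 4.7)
My plan is to split the equivalence into its two directions, treating each independently. For the \emph{if} direction, it suffices to exhibit an $m$-dimensional solution over an arbitrary $\mathbb{F}_q$, since running $k$ independent copies of it yields a $(km)$-dimensional solution. I would build such a code by having $u_i$ view the $m$ sources $X_{i1},\ldots,X_{im}\in\mathbb{F}_q^m$ as the columns of an $m\times m$ matrix $M_i$, then sending on $e_{ii}$ the main diagonal of $M_i$ and on $e_{i,m+k}$ the $k$-th cyclically shifted diagonal (for $k=1,\ldots,m-1$). Each node $v_{m+k}$ then sees the $k$-th diagonal of every group, and for a terminal $t$ demanding $(X_{1,j_1},\ldots,X_{m,j_m})$ it forwards the $m$-dimensional vector whose $i$-th entry is the coordinate of $X_{i,j_i}$ lying on that diagonal. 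Verification reduces to checking that the coordinates delivered by $v_1,\ldots,v_m$ (via the main diagonal) together with those from $v_{m+1},\ldots,v_{2m-1}$ exhaust a full residue system modulo $m$ for each demanded source, which is a routine calculation.

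For the \emph{only if} direction, assume a $d$-dimensional vector linear solution exists and identify each edge message with the rowspace of its coding matrix, living in the dual source space $V^* = V_1^* \oplus \cdots \oplus V_m^*$, where $V_i$ is the $md$-dimensional space of group-$i$ messages. A bandwidth count at $u_i$ (total out is $md$, equal to $\dim V_i$), combined with the observation that every source must be recoverable by some terminal, forces the $m$ outgoing $d$-dimensional rowspaces $S_i^{(0)},S_i^{(1)},\ldots,S_i^{(m-1)}$ (with $S_i^{(0)}$ along $e_{ii}$ and $S_i^{(k)}$ along $e_{i,m+k}$) to form a direct-sum decomposition $V_i^* = \bigoplus_{k=0}^{m-1} S_i^{(k)}$ with each summand exactly $d$-dimensional. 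Assuming WLOG that $v_i$ forwards $Y_{ii}$ unchanged to every terminal, fix a terminal $t$ with demand $(X_{1,j_1},\ldots,X_{m,j_m})$, let $T_{k,t} \subseteq \bigoplus_i S_i^{(k)}$ denote the $\le d$-dimensional subspace sent by $v_{m+k}$ to $t$, and write $\Pi_{i,j}^{(k)}$ for the projection of the demanded subspace $\pi_{i,j} \subseteq V_i^*$ onto $S_i^{(k)}$ along this direct sum. Decomposing $\pi_{i,j_i}$ coordinate-by-coordinate shows that decodability of $X_{i,j_i}$ at $t$ is equivalent to $\Pi_{i,j_i}^{(k)} \subseteq T_{k,t} \cap S_i^{(k)}$ for every $k\ge 1$, and because the $S_i^{(k)}$'s (varying $i$) sit in direct sum inside $\bigoplus_i S_i^{(k)}$,
\[
\sum_{i=1}^{m} \dim \Pi_{i,j_i}^{(k)} \;\le\; \sum_{i=1}^{m} \dim\bigl(T_{k,t}\cap S_i^{(k)}\bigr) \;\le\; \dim T_{k,t} \;\le\; d.
\]
Since $t$ is arbitrary, I may pick $j_i$ independently as the argmax of $\dim \Pi_{i,j}^{(k)}$, yielding $\sum_i \max_j \dim \Pi_{i,j}^{(k)} \le d$ for each $k=1,\ldots,m-1$.

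For the matching lower bound, the subspaces $\pi_{i,1},\ldots,\pi_{i,m}$ span (in fact decompose) $V_i^*$, so their projections $\Pi_{i,1}^{(k)},\ldots,\Pi_{i,m}^{(k)}$ span $S_i^{(k)}$, forcing $\sum_j \dim \Pi_{i,j}^{(k)} \ge d$ and hence $\max_j \dim \Pi_{i,j}^{(k)} \ge d/m$; summing over $i$ gives $\sum_i \max_j \dim \Pi_{i,j}^{(k)} \ge d$. Combined with the upper bound, equality must hold throughout and $\max_j \dim \Pi_{i,j}^{(k)} = d/m$. Since subspace dimensions are non-negative integers, $d/m \in \mathbb{Z}$, i.e., $m \mid d$, as required. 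I expect the main obstacle to be the rigorous setup in the necessity argument: justifying the direct-sum decomposition $V_i^* = \bigoplus_k S_i^{(k)}$ from the bandwidth-plus-reachability count, and carefully translating ``terminal $t$ decodes $X_{i,j_i}$'' into the clean containment $\Pi_{i,j_i}^{(k)} \subseteq T_{k,t} \cap S_i^{(k)}$ via the direct-sum uniqueness. Once those linear-algebraic foundations are in place, the averaging argument that concludes $m\mid d$ is essentially immediate.
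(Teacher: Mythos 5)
Your proposal is correct, but your necessity argument takes a genuinely different route from the paper's. The paper never touches the coding matrices directly: it invokes the equivalence between $d$-dimensional vector linear solvability and discrete polymatroidal networks with a representable polymatroid of $\rho_{max}=d$ (Theorem~\ref{thm:sundar}, from \cite{sundar}), sets $\mathtt{g}=\rho\circ f$, and derives two families of rank inequalities --- $\sum_i \mathtt{g}(Y_{ii},X_{ij_i})\leq (2m-1)d$ from the decoding requirement at the terminal demanding $(X_{1j_1},\ldots,X_{mj_m})$ together with independence across groups, and $\sum_j \mathtt{g}(Y_{ii},X_{ij})\geq (2m-1)d$ from submodularity plus $\mathtt{g}(Y_{ii})=d$ --- which an iterated elimination combines into $\mathtt{g}(Y_{ii},X_{ij})=\frac{(2m-1)d}{m}$; integrality of the rank and $\gcd(2m-1,m)=1$ then force $m\mid d$. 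You instead work inside the dual source space: the bandwidth count at $u_i$ gives $V_i^*=\bigoplus_k S_i^{(k)}$ with each $S_i^{(k)}$ exactly $d$-dimensional, and playing the terminal-indexed upper bound $\sum_i\max_j\dim\Pi_{i,j}^{(k)}\leq d$ against the spanning lower bound $\sum_j\dim\Pi_{i,j}^{(k)}\geq d$ pins $\max_j\dim\Pi_{i,j}^{(k)}$ to exactly $d/m$, an integer. The two computations are cousins --- each isolates a quantity forced to equal a fraction of $d$ with denominator $m$ --- but yours is self-contained and elementary (no polymatroid machinery, no appeal to \cite{sundar}), and it exposes the structural reason for the obstruction, namely that each outgoing direction of $u_i$ must split its $d$ dimensions evenly among the $m$ sources of its group; the price is that you must carefully justify the direct-sum decomposition and the translation of decodability into $\Pi_{i,j_i}^{(k)}\subseteq T_{k,t}\cap S_i^{(k)}$ via uniqueness of components in $\bigoplus_k\bigl(\bigoplus_i S_i^{(k)}\bigr)$, both of which you correctly flag and which do go through. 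The paper's route is shorter given the cited equivalence and directly generalizes the argument of \cite{dougherty} for the M-network. Your sufficiency direction is the paper's routing scheme up to a cyclic re-indexing of which symbol travels on which edge, and you make explicit the stacking of $k$ independent copies needed to cover all multiples of $m$, a step the paper leaves implicit.
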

As a consequence, we have the following corollary.

\begin{corollary}
The network $\mathcal{N}$ is not vector linearly solvable for any message dimension less than $m$, but has an $m$ dimensional vector linear solution.
\end{corollary}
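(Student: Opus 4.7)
The plan is to prove the two directions separately. Sufficiency (if $m \mid d$) will follow from an explicit $m$-dimensional routing solution, replicated $d/m$ times. Necessity (if a $d$-dimensional linear solution exists, then $m \mid d$) will follow from a structural lemma on the encoding at each $u_i$ combined with a double-counting argument on certain intersection dimensions.

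For sufficiency, write each $X_{ij}$ as the $m$-vector $(x_{ij}^{(1)}, \ldots, x_{ij}^{(m)})$ and have $u_i$ send on edge $e_{ii}$ the vector $(x_{i1}^{(1)}, \ldots, x_{im}^{(1)})$ and on each edge $e_{i,m+k}$ (for $k = 1, \ldots, m-1$) the vector $(x_{i1}^{(k+1)}, \ldots, x_{im}^{(k+1)})$; so the $\ell$-th outgoing edge of $u_i$ carries the $\ell$-th scalar coordinate of every source in group $i$. The private relay $v_i$ forwards its vector, and terminal $t = (j_1, \ldots, j_m)$ reads $x_{i,j_i}^{(1)}$ off its $j_i$-th position. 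The shared relay $v_{m+k}$ knows $x_{ij}^{(k+1)}$ for every $i$ and $j$ and, because it has an independent outgoing edge to every terminal, can send $(x_{1,j_1}^{(k+1)}, \ldots, x_{m,j_m}^{(k+1)})$ to $t$ specifically. Concatenating, $t$ recovers every $x_{i,j_i}^{(\ell)}$ and hence $X_{i,j_i}$ in full. For $d = \ell m$ the same scheme runs in $\ell$ parallel copies.

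For necessity, pass to the standard global-coding-vector picture: each source $X_{ij}$ becomes a $d$-dimensional subspace $\mathcal{X}_{ij}$ and each edge $e$ carries a subspace $\mathcal{E}_e$ of dimension at most $d$, all in the ambient $\mathbb{F}_q^{m^2 d}$. Write $\mathcal{U}_i = \bigoplus_j \mathcal{X}_{ij}$, $\mathcal{E}_i^0$ for the subspace on $e_{ii}$, and $\mathcal{E}_i^k$ for the subspace on $e_{i,m+k}$, $k = 1, \ldots, m-1$. The first key step is the structural lemma: since every $X_{ij}$ is demanded by some terminal, all information about group $i$ must flow out through $u_i$'s edges, so $\mathcal{X}_{ij} \subseteq \sum_k \mathcal{E}_i^k$ for each $j$ and hence $\sum_k \mathcal{E}_i^k = \mathcal{U}_i$; combined with the capacity bound $\dim \mathcal{E}_i^k \le d$ and $\dim \mathcal{U}_i = md$, a dimension count forces $\dim \mathcal{E}_i^k = d$ for every $k$ and $\mathcal{U}_i = \bigoplus_{k=0}^{m-1} \mathcal{E}_i^k$.

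The second step is the counting. Define $\alpha_{ij} = \dim(\mathcal{X}_{ij} \cap \mathcal{E}_i^0)$. Because $\{\mathcal{X}_{ij}\}_j$ are in direct sum, so are their intersections with $\mathcal{E}_i^0$, giving $\sum_j \alpha_{ij} \le \dim \mathcal{E}_i^0 = d$ for each $i$. For the matching lower bound, let $\mathcal{W}_k^t \subseteq \bigoplus_i \mathcal{E}_i^k$ be the $d$-dimensional subspace that $v_{m+k}$ delivers to terminal $t$; the accessible subspace at $t = (j_1, \ldots, j_m)$ is then $\mathcal{T}^t = \bigoplus_i \mathcal{E}_i^0 + \sum_k \mathcal{W}_k^t$, and decoding demands $\bigoplus_i \mathcal{X}_{i,j_i} \subseteq \mathcal{T}^t$. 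The structural lemma ensures $\bigoplus_i \mathcal{E}_i^0$ and $\sum_k \mathcal{W}_k^t$ intersect trivially, so the quotient $\mathcal{T}^t / \bigoplus_i \mathcal{E}_i^0$ has dimension at most $(m-1)d$, while the image of $\bigoplus_i \mathcal{X}_{i,j_i}$ in it has dimension $md - \sum_i \alpha_{i,j_i}$, yielding $\sum_i \alpha_{i,j_i} \ge d$ for every terminal. Summing this bound over all $m^m$ terminals gives $\sum_{i,j} \alpha_{ij} \ge md$, which together with the upper bound forces equality throughout: $\sum_j \alpha_{ij} = d$ for each $i$ and $\sum_i \alpha_{i,j_i} = d$ for every $(j_1, \ldots, j_m)$. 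The latter, being independent of the choice of the $j_i$'s, forces $\alpha_{ij}$ to be independent of $j$; so each $\alpha_{ij} = d/m$ must be a nonnegative integer, hence $m \mid d$. The step I expect to need the most care is verifying the clean quotient-dimension formula, which relies on the direct-sum structure from the structural lemma to ensure $\bigoplus_i \mathcal{E}_i^0$ and $\sum_k \mathcal{W}_k^t$ really intersect trivially.
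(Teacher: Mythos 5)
Your proposal is correct, and it reaches the corollary by proving the full ``iff'' statement of Theorem~\ref{thm1}, as the paper does; the sufficiency half (the $m$-dimensional routing scheme, replicated $d/m$ times) is identical to the paper's. For necessity, however, you take a genuinely different route. The paper invokes the equivalence between vector linear solvability and representable discrete polymatroids (Theorem~\ref{thm:sundar}), works with the abstract rank function $\mathtt{g}=\rho\circ f$, derives Set~I from submodularity/entropy-style identities and Set~II from source independence, and then pins down $\mathtt{g}(Y_{mm},X_{m1})=\tfrac{(2m-1)d}{m}$ by an iterated elimination (multiplying by $m$ and substituting Set~II inequalities one group at a time). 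You instead work directly with global coding subspaces and the quantities $\alpha_{ij}=\dim(\mathcal{X}_{ij}\cap\mathcal{E}_i^0)$; since $\mathtt{g}(Y_{ii},X_{ij})=\dim(\mathcal{E}_i^0+\mathcal{X}_{ij})=2d-\alpha_{ij}$, your per-terminal lower bound $\sum_i\alpha_{i,j_i}\geq d$ and per-group upper bound $\sum_j\alpha_{ij}\leq d$ are exactly Set~I and Set~II in disguise, but your final step --- summing the terminal inequality over all $m^m$ terminals, matching it against the group bound, and observing that equality for every tuple forces $\alpha_{ij}$ to be constant in $j$ --- is a cleaner and more symmetric way to extract $\alpha_{ij}=d/m$ than the paper's cascade. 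What your approach buys is self-containedness (no polymatroid machinery) and transparency of the combinatorics; what the paper's buys is that the same inequalities are stated for an arbitrary rank function obeying the polymatroidal axioms, so its argument in principle rules out more than representable structures. Two small points to tighten: the structural claim $\mathcal{X}_{ij}\subseteq\sum_k\mathcal{E}_i^k$ should be justified by intersecting the decoding condition $\mathcal{X}_{ij}\subseteq W^t\subseteq\bigoplus_{i'}\bigl(\sum_k\mathcal{E}_{i'}^k\bigr)$ with $\mathcal{U}_i$ rather than by the informal ``all information must flow through $u_i$''; and the trivial-intersection claim for $\bigoplus_i\mathcal{E}_i^0$ and $\sum_k\mathcal{W}_k^t$ indeed follows, as you anticipate, from decomposing everything along $\bigoplus_i\mathcal{U}_i$ and using $\mathcal{U}_i=\bigoplus_k\mathcal{E}_i^k$.
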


\begin{remark}
It was shown in \cite{dougherty} that the M-network does not have an odd dimensional vector linear solution. This result is a special case of Theorem \ref{thm1} for the case $m=2$. 
\end{remark}

\section{Proof of Theorem \ref{thm1}} \label{sec3}
Our proof of Theorem~\ref{thm1} relies on a result related to discrete polymatroids \cite{sundar}. Therefore to make the proof accessible, in the following, the definitions of discrete polymatroid, representable discrete polymatroid and discrete plolymatroidal network are given \cite{sundar,hibi,poly}. The definitions are borrowed from \cite{sundar}. We first fix some notations. Let $N = \{1,2,\ldots n\}$. The set of all non-negative integers are denoted by $\mathbb{Z}_{\geq 0}$ and the set of positive integers are denoted by $\mathbb{Z}_{> 0}$. The notation $\mathbb{Z}_{\geq 0}^{n}$ indicates the set of all $n$ length vector whose elements are in $\mathbb{Z}_{\geq 0}$. For an $n$ length vector $v$, and $A\subseteq N$, $v(A)$ is the vector having only the components indexed by the elements of $A$, and $|v(A)|$ is the sum of the components of $v(A)$.
 
\begin{definition}
Let $\rho$ be a function that maps $2^N$ into $\mathbb{Z}_{\geq 0}$ and satisfies the following three conditions:\\
(1)\hspace{1em}  $\rho(\emptyset) = 0$.\\
(2)\hspace{1em} If $X\subseteq Y\subseteq N$ then $\rho(X)\leq\rho(Y)$.\\
(3)\hspace{1em} If $X,Y \subseteq N$, then $\rho(X\cup Y) + \rho(X\cap Y) \leq \rho(X) + \rho(Y)$\label{p3}.\\
Let $\mathbb{D}$ be the collection of all elements $v\in \mathbb{Z}_{\geq 0}^{n}$ such that $|v(A)| \leq \rho(A)$ for $\forall A\subseteq N$. Then $\mathbb{D}$ is a discrete polymatroid having rank function $\rho$ and ground set $N$.
\end{definition} 
A discrete polymatroid $\mathbb{D}$ has $\rho_{max}=d$ if $\rho$ follows this additional rule: for $\forall A\subseteq N$, $\rho(A)\leq d|A|$. 

By setting $\rho_{max} = 1$, one can completely describe a matroid with a discrete polymatroid \cite{sundar}. So the discrete polymatroids can be viewed as the generalized version of matroids. Conditions (1)-(3) are known as the \textit{polymatroidal axioms}. It is known that if the entropy function $H(\,)$, in a Shannon-type inequality, is replaced by a function that obeys the polymatroidal axioms then the inequality still remains valid  \cite{dougherty}. We will use this fact in our proof.

\begin{definition}
A discrete polymatroid $\mathbb{D}$ with rank function $\rho$ and ground set $N$, is said to be representable over $\mathbb{F}_q$ if there exist vector subspaces $V_1,V_2,\ldots,V_n$ of a vector space $V$ over $\mathbb{F}_q$ such that $dim(\sum_{i\in X} V_i) = \rho(X)$ for $\forall X\subseteq N$. The set of vector spaces $V_i, i\in N$ is said to form a representation of $\mathbb{D}$. A discrete polymatroid is said to be representable if it is representable for some field.
\end{definition}

\noindent Let $\epsilon_{in}$ be an $n$ length vector whose $i^{\text{th}}$ component is one and all other components are zero. Consider a network in which the set of source messages, the set of nodes and the set of messages carried over by the edges are denoted by $\mu$, $V$ and $\delta$ respectively. For a non-source node $v\in V$, let $I(v)$ be the set of messages carried by the edges in $In(v)$. When $v$ is a source, let $I(v)$ denote the message generated at $v$. Also, let $O(v)$ be the set of messages carried by the edges in $Out(v)$. Let $\mathbb{D}$ be a discrete polymatroid with ground set $N$ and the rank function $\rho$ where $\rho_{max} =d$.

\begin{definition}\label{discrete}
A network is a \textit{discrete polymatroidal network} associated with $\mathbb{D}$ if there exists a function $f:\mu\cup \delta \rightarrow N$ such that the following conditions are satisfied:\\
(1)\hspace{1em} $f$ is one-to-one on $\mu$.\\
(2)\hspace{1em} $\sum_{i\in f(\mu)}d\epsilon_{in}\in\mathbb{D}$.\\
(3)\hspace{1em} $\rho(f(I(x))) = \rho(f(I(x)\cup O(x)))$, $\forall x \in V$. 
\end{definition}
Note that if a network is matroidal (defined in \cite{dougherty}) then it is also discrete polymatroidal for $\rho_{max}=1$. 
The following theorem is reproduction of \textit{Theorem 1} from \cite{sundar}.

\begin{theorem}\label{thm:sundar}
A network has a $k$ dimensional vector linear solution over $\mathbb{F}_q$ if and only if it is discrete polymatroidal with respect to a discrete polymatroid $\mathbb{D}$ representable over $\mathbb{F}_q$ with $\rho_{max} = k$.
\end{theorem}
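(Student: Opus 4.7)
The plan is to prove the two directions separately, invoking Theorem~\ref{thm:sundar} only for the necessity direction.

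(\emph{Sufficiency.}) Assume $d = km$. Running $k$ parallel copies reduces the task to constructing an $m$-dimensional vector linear solution. View each source $X_{ij} \in \mathbb{F}_q^m$ as a tuple $(X_{ij}^{(1)},\ldots,X_{ij}^{(m)})$. On the direct edge $e_{ii}$ node $u_i$ sends the vector whose $j$-th coordinate is $X_{ij}^{(1)}$, and on the shared edge $e_{i,m+l}$ (for $l=1,\ldots,m-1$) it sends the vector whose $j$-th coordinate is $X_{ij}^{(l+1)}$. At the shared node $v_{m+l}$, for terminal $t=(j_1,\ldots,j_m)$, the output is the vector whose $i$-th coordinate is the $j_i$-th coordinate of the $i$-th input; this is a linear combination of the inputs. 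Terminal $t$ then recovers $X_{i,j_i}^{(1)}$ from the $j_i$-th coordinate of $Y_{ii}$ and $X_{i,j_i}^{(l+1)}$ from the $i$-th coordinate of $Z_{m+l,t}$, giving a solution over every finite field.

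(\emph{Necessity.}) Assume the network has a $d$-dimensional vector linear solution over $\mathbb{F}_q$. By Theorem~\ref{thm:sundar} it is discrete polymatroidal with respect to a representable polymatroid $\mathbb{D}$ with $\rho_{\max}=d$; fix a representation and write $V_U$ for the subspace assigned to message $U$. The first step is a rigid direct-sum lemma: because the $m^m$ terminals collectively demand all $m^2$ sources (rank $m^2 d$) through inputs drawn from $\{Y_{ii}\}_i \cup \{Z_{m+k,t}\}_{k,t}$, a saturation argument using conditions (2)--(3) of Definition~\ref{discrete} forces $\dim V_{Y_{ii}} = \dim V_{Y_{i,m+l}} = d$ and the double decomposition
\begin{equation*}
W_i := V_{X_{i1}} \oplus \cdots \oplus V_{X_{im}} = V_{Y_{ii}} \oplus V_{Y_{i,m+1}} \oplus \cdots \oplus V_{Y_{i,2m-1}}.
\end{equation*}
Consequently $W_1 \oplus \cdots \oplus W_m$ splits into ``layers'' $V^{(0)} := \bigoplus_i V_{Y_{ii}}$ and $V^{(k)} := \bigoplus_i V_{Y_{i,m+k}}$ for $k=1,\ldots,m-1$.

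For each $k \in \{1,\ldots,m-1\}$ let $\pi^{(k)}$ denote the projection onto $V^{(k)}$ along the other layers, and set $p^{(k)}_{ij} := \dim \pi^{(k)}(V_{X_{ij}})$. Since $\pi^{(k)}(W_i) = V_{Y_{i,m+k}}$ has dimension $d$ and $W_i = \sum_j V_{X_{ij}}$, subadditivity gives the column bound $\sum_j p^{(k)}_{ij} \ge d$. Next, projecting the decodability inclusion $\sum_i V_{X_{i,j_i}} \subseteq V^{(0)} + \sum_{k'} V_{Z_{m+k',t}}$ through $\pi^{(k)}$ collapses every summand on the right other than $V_{Z_{m+k,t}}$ to zero (they all lie in the kernel of $\pi^{(k)}$), while the left-hand side becomes the direct sum $\bigoplus_i \pi^{(k)}(V_{X_{i,j_i}})$; since $\dim V_{Z_{m+k,t}} \le d$, this yields the row bound $\sum_i p^{(k)}_{i,j_i} \le d$ for every tuple $(j_1,\ldots,j_m) \in \{1,\ldots,m\}^m$.

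Summing the row bound over the $m^m$ terminals gives $m^{m-1}\sum_{i,j} p^{(k)}_{ij} \le m^m d$, hence $\sum_{i,j} p^{(k)}_{ij} \le md$, which combined with the column bound forces $\sum_j p^{(k)}_{ij} = d$ for every $i$ and equality in the row bound for every tuple. Comparing two tuples that differ only in coordinate $i$ then shows $p^{(k)}_{ij}$ is independent of $j$; with $m$ equal entries summing to $d$, each equals $d/m$. Since $p^{(k)}_{ij}$ is a dimension, $d/m \in \mathbb{Z}_{\ge 0}$, i.e.\ $m \mid d$. The main obstacle is the rigid double direct-sum in the first step, which must be extracted purely from the polymatroid axioms and the aggregate decodability across all terminals; once in place, the divisibility falls out of a short counting argument.
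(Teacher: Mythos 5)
Your proposal does not address the statement under review. The statement is Theorem~\ref{thm:sundar}: the general equivalence, valid for an \emph{arbitrary} network, between having a $k$-dimensional vector linear solution over $\mathbb{F}_q$ and being discrete polymatroidal with respect to a discrete polymatroid representable over $\mathbb{F}_q$ with $\rho_{max}=k$. In the paper this is a quoted result (Theorem~1 of \cite{sundar}) used as a black box; it is not proved there. What you have written is instead a proof sketch of Theorem~\ref{thm1}, the $m\mid d$ divisibility result for the specific network $\mathcal{N}$, and your necessity direction explicitly \emph{invokes} Theorem~\ref{thm:sundar}. As a proof of Theorem~\ref{thm:sundar} itself this is circular, and neither of your two ``directions'' corresponds to a direction of the stated equivalence.

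A genuine proof of Theorem~\ref{thm:sundar} would have to argue, for an arbitrary network: (forward) given a $k$-dimensional solution, assign to each source message and each edge message the column space of its global transfer matrix, define $\rho(A)$ as the dimension of the sum of the corresponding subspaces, verify the three polymatroidal axioms, verify that the source subspaces are independent of rank $k$ each (giving conditions (1)--(2) of Definition~\ref{discrete}), and deduce condition (3) from the fact that every outgoing message is a linear function of the incoming ones; (converse) given a representation satisfying Definition~\ref{discrete}, recover local encoding matrices and verify decodability at every terminal. None of these steps appears in your write-up. Separately, even read as an attempt at Theorem~\ref{thm1}, your central ``rigid direct-sum lemma'' is asserted rather than proved; the paper instead derives the needed constraints from the Set~I and Set~II rank inequalities and an elimination argument giving $\mathtt{g}(Y_{ii},X_{ij})=(2m-1)d/m$, from which $\gcd(2m-1,m)=1$ forces $m\mid d$.
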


\textit{Proof of Theorem~\ref{thm1}:}
The proof is similar to that used in \cite{dougherty} to show that the M-network is not matroidal. Say the function $f$ maps the network $\mathcal{N}$ to a discrete polymatroid $\mathbb{D}$ conforming to the rules of mapping presented in Definition~\ref{discrete}. Also let $\mathtt{g} = \rho \circ f$ where $\rho$ is the rank function of $\mathbb{D}$. Assume $\rho_{max} = d$. Our proof depends on the following two sets of inequalities.
\textbf{Set I}: 
\begin{equation}
\mathtt{g}(Y_{11},X_{1j_1}) + \mathtt{g}(Y_{22},X_{2j_2}) + \cdots + \mathtt{g}(Y_{mm},X_{mj_m}) \leq (2m-1)d \; \text{ for }  j_i\in \{1,2,\ldots,m\}, 1\leq i \leq m 
\end{equation}
\textbf{Set II}:
\begin{equation}
\mathtt{g}(Y_{ii},X_{i1}) + \mathtt{g}(Y_{ii},X_{i2}) + \cdots + \mathtt{g}(Y_{ii},X_{im})
\geq  (2m-1)d \; \text{ for } 1\leq i\leq m 
\end{equation}
\begin{claim}\label{cla1}
The inequalities in Set~I hold true.
\end{claim}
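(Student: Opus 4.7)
The strategy is to reduce $\sum_{i=1}^m \mathtt{g}(Y_{ii},X_{ij_i})$ to a single joint rank using a modularity identity, and then bound that joint rank by the capacity cut at the terminal demanding $(X_{1j_1},\ldots,X_{mj_m})$.

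Set $S_i=\{X_{i1},\ldots,X_{im}\}$. Definition~\ref{discrete}(2) together with $\rho_{\max}=d$ forces $\rho(f(A))=d|A|$ for every subset $A$ of sources; in particular $\rho(f(S_i))=md$ and $\rho(f(S_i\cup S_j))=2md$ for $i\ne j$. Definition~\ref{discrete}(3) at $u_i$ gives $\rho(f(S_i\cup\{Y_{ii}\}))=\rho(f(S_i))$, i.e.\ $f(Y_{ii})$ is in the closure of $f(S_i)$; applied at each $v_i$ with $i\le m$ (whose sole input is $Y_{ii}$), it puts every outgoing message of $v_i$ in the closure of $f(Y_{ii})$.

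The heart of the proof is the modular identity
\[
\mathtt{g}\bigl(Y_{11},X_{1j_1},\ldots,Y_{mm},X_{mj_m}\bigr)\;=\;\sum_{i=1}^{m}\mathtt{g}(Y_{ii},X_{ij_i}).
\]
I would first establish the auxiliary identity $\rho(f(A\cup S_j))=\rho(f(A))+\rho(f(S_j))$ whenever $f(A)$ lies in the closure of $f(S_i)$ and $j\ne i$: submodularity applied to $f(A\cup S_i)$ and $f(A\cup S_j)$, combined with $\rho(f(A\cup S_i))=md$ and $\rho(f(A\cup S_i\cup S_j))=2md$ (both from closure and source-independence), gives the reverse of subadditivity. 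A second application of submodularity, this time to $f(B\cup S_j)$ and $f(A\cup B)$ for $f(B)$ in the closure of $f(S_j)$, again collapses every other term by closure and yields $\rho(f(A\cup B))\ge \rho(f(A))+\rho(f(B))$; paired with subadditivity this is the two-group modular identity. An induction on the $m$ source groups extends it to the displayed $m$-fold identity, since each $\{Y_{ii},X_{ij_i}\}$ lies in the closure of $f(S_i)$.

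For the joint-rank bound, let $t$ be the terminal demanding $D_t=\{X_{1j_1},\ldots,X_{mj_m}\}$. Subadditivity over the $2m-1$ edges of $I(t)$ gives $\rho(f(I(t)))\le(2m-1)d$, and Definition~\ref{discrete}(3) at $t$ gives $\rho(f(I(t)\cup D_t))=\rho(f(I(t)))$. Since the incoming message at $t$ from each $v_i$ with $i\le m$ lies in the closure of $f(Y_{ii})$, it can be absorbed by $f(Y_{ii})$ without changing the joint rank; subadditivity applied to the remaining $2m-1$ generators (the $m$ values $Y_{11},\ldots,Y_{mm}$ together with the $m-1$ incoming messages from $v_{m+1},\ldots,v_{2m-1}$) then gives $\rho(f(\{Y_{11},\ldots,Y_{mm}\}\cup D_t))\le(2m-1)d$. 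Combined with the modular identity, this proves Claim~\ref{cla1}. The main obstacle is the modular identity: in a general discrete polymatroid $\rho(A)+\rho(B)$ can strictly exceed $\rho(A\cup B)$ even when $A$ and $B$ sit in closures of independent sets, and the matching lower bound must be extracted from the combined use of source-independence and the closure at each $u_i$; once this is in hand the rest is a routine cut-set estimate.
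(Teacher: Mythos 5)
Your proposal is correct and follows the same two-step skeleton as the paper's proof: first collapse $\sum_{i}\mathtt{g}(Y_{ii},X_{ij_i})$ to the single joint rank $\mathtt{g}(Y_{11},X_{1j_1},\ldots,Y_{mm},X_{mj_m})$, then bound that joint rank by $(2m-1)d$ using the $2m-1$ edges entering the terminal that demands $(X_{1j_1},\ldots,X_{mj_m})$ together with its decoding condition and subadditivity. The only real difference is how the collapse is justified. The paper writes down the entropy identity (\ref{ent1}) for the actual random variables (independence of the source groups plus the fact that $Y_{ii}$ is a function of the $i$-th group) and then asserts that $\mathtt{g}$ obeys the same identity because it satisfies the polymatroidal axioms; strictly speaking that transfer is not automatic, since (\ref{ent1}) is an equality reflecting a particular independence structure rather than a universal Shannon-type inequality, and it has to be re-derived inside the polymatroid. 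Your auxiliary lemma --- $\rho(f(A\cup S_j))=\rho(f(A))+\rho(f(S_j))$ whenever $f(A)$ lies in the closure of $f(S_i)$ with $i\neq j$, obtained from condition (2) of Definition~\ref{discrete} (which forces $\rho(f(B))=d\lvert B\rvert$ on source sets), condition (3) at the nodes $u_i$ and $v_i$, and submodularity --- supplies exactly that missing derivation, so your version is the more self-contained of the two; the closing cut-set estimate is then identical in both.
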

\begin{proof}
We give the proof for $j_i = 1$, for $1\leq i \leq m$. The rest of the inequalities can be proved in a similar way. To prove our claim we use the following fact about the entropy function: for any set of random variables $A_j$, $1\leq j\leq k$,
\begin{equation}
H(A_1) + H(A_2) + \cdots + H(A_k) = H(A_1,A_2,\ldots,A_k)
\text{ iff } H(A_i|A_1,A_2,\dots,A_{i-1}) = H(A_i) \text{ for } 2\leq i\leq k
\end{equation}
Now in $\mathcal{N}_1$, \mbox{$H(Y_{22},X_{21}|Y_{11},X_{11}) = H(Y_{22},X_{21})$}.\\
Similarly, for \mbox{$1\leq i,j,l,k\leq m$}, $H(Y_{ii},X_{ij}|\cup_{k\neq i}Y_{kk},\cup_{k\neq i}X_{kl}) = H(Y_{ii},X_{ij})$. Therefore,
\begin{IEEEeqnarray}{r}
H(Y_{11},X_{11})+H(Y_{22},X_{21})+\cdots +H(Y_{mm},X_{m1})= H(Y_{11},X_{11},Y_{22},X_{21},\ldots ,Y_{mm},X_{m1})\label{ent1}
\end{IEEEeqnarray}
Since $g$ obeys polymatroid axioms, it also obeys (\ref{ent1}) if $H$ is replaced by $g$. Thus, 
\begin{IEEEeqnarray*}{l}
\mathtt{g}(Y_{11},X_{11})+\mathtt{g}(Y_{22},X_{21})+\cdots +\mathtt{g}(Y_{mm},X_{m1})\\
= \mathtt{g}(Y_{11},X_{11},Y_{22},X_{21},\ldots ,Y_{mm},X_{m1})\\
\leq \mathtt{g}(Y_{11},X_{11},Y_{22},X_{21},\ldots ,Y_{mm},X_{m1},Z_{(m+1,1)},Z_{(m+2,1)},\ldots ,Z_{(2m-1,1)})\\
= \mathtt{g}(Y_{11},Y_{22},\ldots ,Y_{mm},Z_{(m+1,1)},\ldots ,Z_{(2m-1,1)})\IEEEyesnumber\label{a1}\\
\leq (2m-1)d \IEEEyesnumber \label{a2}
\end{IEEEeqnarray*}

The equation~(\ref{a1}) is true because the terminal $t_1$ computes $(X_{11},X_{21},\ldots,X_{m1})$ from the messages $\{Y_{11},Y_{22}, \ldots ,Y_{mm},\\ Z_{(m+1,1)},Z_{(m+2,1)},\ldots ,Z_{(2m-1,1)}\}$. Equation~(\ref{a2}) is true because each element can have rank maximum of $d$ and there are $(2m-1)$ elements. This concludes the proof of Claim~\ref{cla1}.
\end{proof}

\begin{claim}\label{cla2}
The inequalities in Set~II hold true.
\end{claim}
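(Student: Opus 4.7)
The plan is to apply iterated submodularity to the $m$ sets $\{Y_{ii},X_{i,j}\}_{j=1}^{m}$, which share the common element $Y_{ii}$. Using the polymatroid axiom $\mathtt{g}(A)+\mathtt{g}(B)\geq \mathtt{g}(A\cup B)+\mathtt{g}(A\cap B)$ repeatedly, I would obtain
\[
\sum_{j=1}^{m}\mathtt{g}(Y_{ii},X_{i,j})\geq \mathtt{g}(Y_{ii},X_{i,1},\ldots,X_{i,m})+(m-1)\,\mathtt{g}(Y_{ii}).
\]
Condition~(2) of Definition~\ref{discrete} together with $\rho_{\max}=d$ pins $\mathtt{g}(X_{i,1},\ldots,X_{i,m})$ at $md$, and condition~(3) at the node $u_i$ (whose incoming messages are exactly $X_{i,1},\ldots,X_{i,m}$ and whose outgoing messages include $Y_{ii}$) gives $\mathtt{g}(Y_{ii},X_{i,1},\ldots,X_{i,m})=md$. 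Thus the claim reduces to proving the auxiliary fact $\mathtt{g}(Y_{ii})=d$, after which $\sum_j\mathtt{g}(Y_{ii},X_{i,j})\geq md+(m-1)d=(2m-1)d$.

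The hard step is the lower bound $\mathtt{g}(Y_{ii})\geq d$; the reverse inequality $\mathtt{g}(Y_{ii})\leq d$ is immediate from $\rho_{\max}=d$. I would establish it by a global argument aggregating all $m^m$ terminals rather than a local one at $u_i$. Let $U=\bigcup_{t}I(t)$. Because every source $X_{k,\ell}$ is demanded by at least one terminal and the decoding constraint (condition~(3) at terminals, which places each demanded source in the closure of its terminal's input set, as used in equation~(\ref{a1})) forces every source into the closure of $U$, one gets
\[
\mathtt{g}(U)\geq \mathtt{g}\bigl(\{X_{k,\ell}:k,\ell\in\{1,\ldots,m\}\}\bigr)=m^{2}d.
\]
On the other hand, $U=\{Y_{11},\ldots,Y_{mm}\}\cup\{Z_{k,t}:m+1\leq k\leq 2m-1,\,1\leq t\leq m^{m}\}$. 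For each $k\geq m+1$, condition~(3) at $v_k$ (whose inputs are $Y_{1,k},\ldots,Y_{m,k}$) forces $\mathtt{g}(\{Z_{k,t}\}_{t})\leq \mathtt{g}(Y_{1,k},\ldots,Y_{m,k})\leq md$, and subadditivity then gives $\mathtt{g}(U)\leq\sum_{i=1}^{m}\mathtt{g}(Y_{ii})+(m-1)md$. Comparing the two bounds, $\sum_i\mathtt{g}(Y_{ii})\geq m^{2}d-(m-1)md=md$, and since each summand is at most $d$, every $\mathtt{g}(Y_{ii})$ must equal $d$.

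Substituting $\mathtt{g}(Y_{ii})=d$ back into the iterated-submodularity chain delivers the required bound $(2m-1)d$, completing Claim~\ref{cla2}. The delicate point is that the auxiliary fact $\mathtt{g}(Y_{ii})=d$ cannot be read off from the local structure at $u_i$ alone (condition~(3) there only forces $Y_{ii}$ into the closure of $\{X_{i,1},\ldots,X_{i,m}\}$, not that $Y_{ii}$ saturates its own rank bound); it genuinely requires the simultaneous presence of a family of terminals whose joint demands cover every source.
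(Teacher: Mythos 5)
Your proposal is correct and follows essentially the same strategy as the paper: iterated submodularity over the sets $\{Y_{ii},X_{ij}\}_{j=1}^m$ to reduce the claim to $\mathtt{g}(Y_{ii},X_{i1},\ldots,X_{im})=md$ and $\mathtt{g}(Y_{ii})=d$, with the latter obtained by sandwiching $m^2d$ between the joint rank of all sources (which lie in the closure of the terminal-side messages because every source is demanded) and a subadditive upper bound. The only cosmetic difference is that the paper runs the sandwich over all $m^2$ messages $Y_{ij}$ leaving the nodes $u_i$ (concluding each has rank exactly $d$), whereas you run it over the terminals' input sets and bound each group $\{Z_{kt}\}_t$ by $\mathtt{g}(Y_{1k},\ldots,Y_{mk})\leq md$; both are valid instances of the same global counting argument.
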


\begin{proof}
We will give the proof of the inequality for $i=1$. The rest can be proved similarly. First we show that $\mathtt{g}(Y_{ii}) =\mathtt{g}(Y_{ij})= d$ for $1\leq i\leq m, m+1\leq j\leq 2m-1$. Since all source messages are independent,
\begin{IEEEeqnarray*}{l}
m^2d = \mathtt{g}(X_{11},X_{12},\ldots ,X_{mm})\\
\leq  \mathtt{g}(X_{11},\ldots ,X_{mm},Y_{11},\ldots ,Y_{mm},Y_{1(m+1)},\ldots ,Y_{m(2m-1)})\\
=  \mathtt{g}(Y_{11},Y_{22},\ldots ,Y_{mm},Y_{1(m+1)},Y_{1(m+2)},\ldots ,Y_{m(2m-1)})\label{2a}\IEEEyesnumber\IEEEeqnarraynumspace\\
\leq  \mathtt{g}(Y_{11}) + \mathtt{g}(Y_{22}) + \cdots + \mathtt{g}(Y_{mm}) + \mathtt{g}(Y_{1(m+1)}) + \mathtt{g}(Y_{1(m+2)}) + \cdots + \mathtt{g}(Y_{m(2m-1)})\\
\leq  md + m(m-1)d= m^2d
\end{IEEEeqnarray*}

Equality in (\ref{2a}) follows because every symbol is demanded by some terminal. Hence, 
\begin{equation}
\mathtt{g}(Y_{11}) + \mathtt{g}(Y_{22}) + \cdots + \mathtt{g}(Y_{mm}) + \mathtt{g}(Y_{1(m+1)}) + \mathtt{g}(Y_{1(m+2)}) + \cdots + \mathtt{g}(Y_{m(2m-1)}) = m^2d
\end{equation}  
Since, there are $m^2$ terms and each term can take a maximum value of $d$, $\mathtt{g}(Y_{ii}) = \mathtt{g}(Y_{ij}) = d$ for $1\leq i\leq m$ and $m+1\leq j\leq 2m-1$. Now we prove the inequality:
\begin{IEEEeqnarray*}{l}
\mathtt{g}(Y_{11},X_{11}) + \mathtt{g}(Y_{11},X_{12}) + \cdots + \mathtt{g}(Y_{11},X_{1m})\\
\geq \mathtt{g}(Y_{11},X_{11},X_{12}) + \mathtt{g}(Y_{11}) + \cdots +\mathtt{g}(Y_{11},X_{1m})\\
\geq \mathtt{g}(Y_{11},X_{11},X_{12},X_{13}) + 2\mathtt{g}(Y_{11}) + \cdots + \mathtt{g}(Y_{11},X_{1m})\\
\qquad \qquad \textsf{:}\qquad \qquad   \textsf{:} \qquad \qquad  \textsf{:}\\ 
\geq \mathtt{g}(Y_{11},X_{11},X_{12},\ldots ,X_{1m}) + (m-1)\mathtt{g}(Y_{11})\\
= md + (m-1)d = (2m-1)d
\end{IEEEeqnarray*}

Here we have used condition (3) from definition~\ref{p3} repeatedly. This concludes the proof of Claim~\ref{cla2}.
\end{proof}
We prove the theorem by finding a constraint on the rank function using the inequalities in Set I and Set II. We show that $\mathtt{g}(Y_{ii},X_{ij}) = \frac{(2m-1)d}{m} \text{ for } 1\leq i,j\leq m$. We will give the proof only for $\mathtt{g}(Y_{mm},X_{m1}) = \frac{(2m-1)d}{m}$. The rest can be proved similarly. To prove that $\mathtt{g}(Y_{mm},X_{m1}) = \frac{(2m-1)d}{m}$, we consider an inequality from Set I which has $\mathtt{g}(Y_{mm},X_{m1})$ on the left hand side. We then eliminate (one by one) all the rest terms except $\mathtt{g}(Y_{mm},X_{m1})$ from left hand side using other inequalities from the Set I and inequalities from Set II.  Consider the inequality:
\begin{IEEEeqnarray}{l}
\mathtt{g}(Y_{11},X_{11}) {+} \mathtt{g}(Y_{22},X_{21}) {+} \cdots {+} \mathtt{g}(Y_{mm},X_{m1}) \leq  (2m-1)d \nonumber \\
\end{IEEEeqnarray}
Now consider all other inequalities from Set I which differ only at the first term than the above inequality. There are exactly $m-1$ such inequalities. These inequalities are written below: 
\begin{IEEEeqnarray*}{l}
\mathtt{g}(Y_{11},X_{12}) {+} \mathtt{g}(Y_{22},X_{21}) {+} \cdots {+} \mathtt{g}(Y_{mm},X_{m1}) \leq  (2m-1)d\\
\mathtt{g}(Y_{11},X_{13}) {+} \mathtt{g}(Y_{22},X_{21}) {+} \cdots {+} \mathtt{g}(Y_{mm},X_{m1}) \leq  (2m-1)d\\
\qquad \qquad \textsf{:}\qquad \qquad   \textsf{:} \qquad \qquad  \textsf{:}\\ 
\mathtt{g}(Y_{11},X_{1m}) {+} \mathtt{g}(Y_{22},X_{21}) {+} \cdots {+} \mathtt{g}(Y_{mm},X_{m1}) {\leq}  (2m-1)d
\end{IEEEeqnarray*}
Summing up all of the above $m-1$ inequalities and the inequality in the equation (9), we get:
\begin{IEEEeqnarray*}{l}
\mathtt{g}(Y_{11},X_{11}) + \mathtt{g}(Y_{11},X_{12}) + \cdots + \mathtt{g}(Y_{11},X_{1m}) + m\big\{\mathtt{g}(Y_{22},X_{21}) + \cdots + \mathtt{g}(Y_{mm},X_{m1})\big\} \leq m(2m-1)d
\end{IEEEeqnarray*}
From Set II, we know that $\mathtt{g}(Y_{11},X_{11}) + \mathtt{g}(Y_{11},X_{12}) + \cdots + \mathtt{g}(Y_{11},X_{1m}) \geq (2m-1)d$. Substituting this in the above equation we get:
\begin{IEEEeqnarray}{l}
m\big\{\mathtt{g}(Y_{22},X_{21}) + \mathtt{g}(Y_{33},X_{31}) + \cdots + \mathtt{g}(Y_{mm},X_{m1})\big\} \leq  m(2m-1)d - (2m-1)d \label{ultra}
\end{IEEEeqnarray}
Note that the term $\mathtt{g}(Y_{11},X_{11})$ has been eliminated in the equation (9). In the similar manner as above, we can show that 
\begin{IEEEeqnarray}{l}
m\big\{\mathtt{g}(Y_{22},X_{2j}) + \mathtt{g}(Y_{33},X_{31}) + \cdots + \mathtt{g}(Y_{mm},X_{m1})\big\}
\leq m(2m-1)d - (2m-1)d \text{ for } 2\leq j\leq m
\end{IEEEeqnarray}
Summing up the above $m-1$ inequalities and the inequality in the equation (\ref{ultra}), we get:
\begin{IEEEeqnarray}{l}
m\mathtt{g}(Y_{22},X_{21}) + m\mathtt{g}(Y_{22},X_{22}) + \cdots + m\mathtt{g}(Y_{22},X_{2m}) + m^2\mathtt{g}(Y_{33},X_{31}) + \cdots + m^2\mathtt{g}(Y_{mm},X_{m1})\IEEEnonumber\\
\hfill \leq m^2(2m-1)d - m(2m-1)d \label{here}
\end{IEEEeqnarray}
From Set II, we have $\mathtt{g}(Y_{22},X_{21}) + \mathtt{g}(Y_{22},X_{22}) + \cdots + \mathtt{g}(Y_{11},X_{2m}) \geq (2m-1)d$. Using this inequality in the equation (\ref{here}), we have: 
\begin{IEEEeqnarray}{l}
m^2\mathtt{g}(Y_{33},X_{31}) + \cdots + m^2\mathtt{g}(Y_{mm},X_{m1}) \leq m^2(2m-1)d - 2m(2m-1)d \label{here1}
\end{IEEEeqnarray}

Note that, in the above inequality, the term $\mathtt{g}(Y_{22},X_{21})$ from the equation (10) has been eliminated and thereby the terms $\mathtt{g}(Y_{11},X_{11})$ and $\mathtt{g}(Y_{22},X_{21})$ from the equation (9) have been eliminated.  In this way, eliminating term after term from the left hand side of the equation (9), we get 
\begin{IEEEeqnarray}{l}
m^{m-1}\mathtt{g}(Y_{mm},X_{m1}) \leq m^{m-1}(2m-1)d - (m-1)m^{m-2}(2m-1)d\IEEEeqnarraynumspace\\[6pt]
\text{And hence, }\quad \mathtt{g}(Y_{mm},X_{m1}) \leq \frac{(2m-1)d}{m}
\end{IEEEeqnarray}

Similarly it can be shown that
\begin{IEEEeqnarray}{l}
\mathtt{g}(Y_{mm},X_{mj}) \leq \frac{(2m-1)d}{m} \text{ for } 2\leq j\leq m
\end{IEEEeqnarray}
From Set II, we have that $\mathtt{g}(Y_{mm},X_{m1}) + \mathtt{g}(Y_{mm},X_{m2}) + \cdots + \mathtt{g}(Y_{mm},X_{mm}) \geq (2m-1)d$. Hence, it must be that
\begin{IEEEeqnarray}{l}
\mathtt{g}(Y_{mm},X_{m1}) = \frac{(2m-1)d}{m} 
\end{IEEEeqnarray}

Note that $\gcd(2m-1,m) = 1$. Also, by definition, the rank function is integer valued. Therefore, for $\mathtt{g}(Y_{ii},X_{ij})$ to be a positive integer, $d$ has to be a positive integer multiple of $m$. Thus, by Theorem~\ref{thm:sundar},  for $\mathcal{N}$ to be vector linearly solvable, it is necessary that the message dimension is a positive integer multiple of $m$.

Now we describe a coding scheme for the network achieving an $m$ dimensional vector linear solution. In fact, our coding scheme is a routing scheme. Let the $k^{th}$ symbol of the source $s_{ij}$ is denoted by $X_{ijk}$ where $1\leq k \leq m$. The edge $e_{ii}$ for $1\leq i\leq m$ carries the following $m$ length vector: $[X_{i11},X_{i21},\ldots,X_{im1}]$. And the edge $e_{ij}$ for $1\leq i\leq m$ and $m+1\leq j\leq 2m-1$, carries the vector $[X_{i1(j-m+1)},X_{i2(j-m+1)},\ldots,X_{im(j-m+1)}]$. Now, for any terminal it can be seen that the demands can be satisfied just by routing the required symbols from $v_i$ for $1\leq i \leq 2m-1$ to the terminals. For example, the demands of the terminal $t_1$ is met in the following way: the terminal $t_1$ gets $X_{i11}$ from the message coming from $(v_i,t_1)$ for $1\leq i \leq m$; and $[X_{11(j+1)},X_{21(j+1)},\ldots,X_{m1(j+1)}]$ for $1\leq j\leq m-1$ from the edge $(v_{m+j}, t_1)$. \hfill $\blacksquare$

\section{Conclusion} \label{sec4}
In this paper, we have shown that for any integer $m\geq 2$, there exists a network which has a $m$ dimensional vector linear solution, but has no vector linear solution when message dimension is less than $m$. Our future research objective is to investigate the necessary and sufficient conditions for a network to have a 2-dimensional vector linear solution, but no scalar linear solution.

\end{document}